\begin{document}
\title[Log Heston Model for VIX]{Log Heston Model for Monthly Average VIX}
\author{Jihyun Park, Andrey Sarantsev}
\address{University of Michigan, Ann Arbor; Medical School}
\email{jihyunp@med.umich.edu}
\address{Department of Mathematics \& Statistics; University of Nevada, Reno}
\email{asarantsev@unr.edu}
\date{\today}
\newtheorem{theorem}{Theorem}
\newtheorem{lemma}{Lemma}
\theoremstyle{definition}
\newtheorem{asmp}{Assumption}
\newtheorem{defn}{Definition}
\begin{abstract}
We model time series of VIX (monthly average) and monthly stock index returns. We use log-Heston model: logarithm of VIX is modeled as an autoregression of order 1. Our main insight is that normalizing monthly stock index returns (dividing them by VIX) makes them much closer to independent identically distributed Gaussian. The resulting model is mean-reverting, and the innovations are non-Gaussian. The combined stochastic volatility model fits well, and captures Pareto-like tails of real-world stock market returns. This works for small and large stock indices, for both price and total returns. 
\end{abstract}

\maketitle

\thispagestyle{empty}

\section{Introduction}

\subsection{Stochastic volatility model} For the stock market, the very basic and classic model is geometric random walk (or, in continuous time, geometric Brownian motion). These processes have increments of logarithms, called {\it log returns} as independent identically distributed Gaussian random variables $X_1, X_2, \ldots$ This assumes that the standard deviation of these log returns, called {\it volatility}, is constant over time. Empirically, however, volatility does depend on time. Stretches of high volatility (usually corresponding to financial crises and economic troubles) alternate with periods of low volatility. More recently, models of {\it stochastic volatility} were proposed: 
\begin{equation}
\label{eq:returns-basic}
X_t = Z_tV_t.
\end{equation}
Here, $Z_1, Z_2, \ldots$ are independent identically distributed (IID) random variables (often normal, possibly with nonzero mean). The volatility $V_t$ is modeled by some mean-reverting stochastic process. A version of these is when $V_t$ is a function of past $V_s$ and $X_s$ for $s < t$: {\it generalized autoregressive conditional heteroscedastic} (GARCH) models. The word {\it heteroscedastic} means variance changing with time; de facto this is the same as {\it stochastic volatility} (SV). However, in practice, the term {\it stochastic volatility} is usually reserved for the models when $V_t$ is driven by its own innovation (noise) terms, distinct from  $Z_t$. The simplest mean-reverting model is {\it autoregression of order 1}, denoted by \textsc{AR}(1), and known as the {\it Heston model}, \cite{Heston}: 
\begin{equation}
\label{eq:Heston}
V_t = \alpha + \beta V_{t-1} + W_t,
\end{equation}
where $\beta \in (0, 1)$, and $W_t$ are IID mean zero innovations. However, in our research below, we find that this model fits poorly. Thus we propose an alternative. One natural model comes to mind: By construction, the volatility is always positive. We take its logarithm:
\begin{equation}
\label{eq:ln-VIX-AR}
 \ln V_t = \alpha +\beta \ln V_{t-1} + W_t.
\end{equation}
Here, $\beta \in (0, 1)$; this condition ensures mean-reversion. Also, $W_t$ are independent identically distributed mean zero random variables.  In our research, this model works better than the original Heston model. We call it the {\it log-Heston model}. Innovations $Z_t$ and $W_t$ might be dependent and correlated. We refer the reader to \cite{Skeptics}, where a proof is given that volatility models have strong predictive power. In continuous time, the combined model~\eqref{eq:returns-basic} and~\eqref{eq:ln-VIX-AR} can be written as follows (for $b > 0$):
\begin{align*}
\mathrm{d}S_t & = \mu V_tS_t\,\mathrm{d}t + \sigma V_tS_t\,\mathrm{d}W_t,\\
\mathrm{d}\ln V_t &= (a - b\ln V_t)\,\mathrm{d}t + \tilde{\sigma}\,\mathrm{d}B_t,
\end{align*}
where $B = (B_t,\, t \ge 0)$ and $W = (W_t,\, t \ge 0)$ are two standard Brownian motions, possibly correlated: $\mathbb E[B_tW_t] = \rho t$ for some $\rho \in (-1, 1)$. 

Almost always in the literature, we assume one can observe only log returns $X_t$ but not the volatility $V_t$. This makes estimation difficult for both GARCH and SV models. We refer the reader to the foundational articles \cite{GARCH-article, SV-article}, textbooks \cite{SV-book, GARCH-book}, and references therein. A related model is considered in \cite{log-GARCH1, log-GARCH2}, where GARCH is modified to make linear regression for log volatility instead of variance. This model is similar to our log-Heston model, but it has more restrictions on innovations. Also, here we observe VIX directly, whereas in estimating GARCH-type models we need to imply the volatility. 

\subsection{Time frequency} Mostly the literature on stochastic volatility focuses on daily data. However, in this article, we are interested in monthly returns, for two reasons. First, returns with larger time step exhibit more regularity. It is easier to write a model of them with IID Gaussian innovations. Second, monthly returns are more meaningful for long-term investors such as retirees, college funds, and endowments. Usually, withdrawals or contributions to these accounts are made monthly, quarterly, or even less frequency. 

\begin{figure}[t]
\centering
\subfloat[VIX]{\includegraphics[width=8cm]{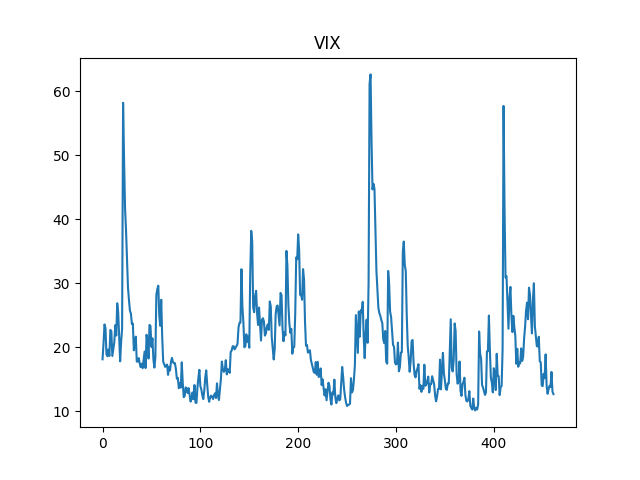}}
\subfloat[Small Price Returns]{\includegraphics[width=8cm]{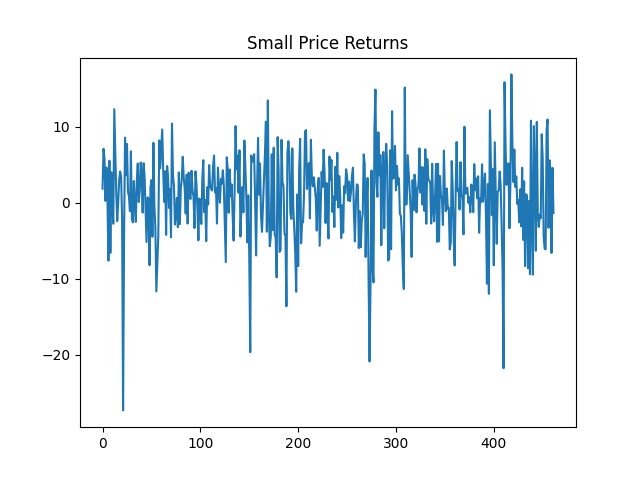}}
\caption{VIX Jan 1986-- Jun 2024} 
\label{fig:vix}
\end{figure}

\subsection{The volatility index: VIX} However, there is a rich options market upon common stocks and stock indices. A European option is a right to buy or sell a stock at a certain future time, called {\it maturity}, for a fixed price, called {\it strike}. The celebrated Black-Scholes formula for the price of an option uses the stock volatility as input, together with its current price, maturity, and strike. Now take the market price of an option with given maturity and strike upon a stock with certain (known) current price. Solve the Black-Scholes formula backwards to get volatility. This is called {\it implied volatility}. The well-known index Standard \& Poor 500 has particularly many such options traded on the market, with various maturities and strikes. Averaging implied volatilities computed from these options, we get the {\it Chicago Board of Options Exchange} (CBOE) {\it volatiltiy index} (VIX). Using this for $V_t$ makes estimating SV models much easier, because we have more data. We no longer have to imply stochastic volatility data from stock market returns. We can observe the former as well as the latter. See \cite{Whaley} about more details on the construction of VIX. This index was also studied in \cite{Hoerova}, where it was decomposed into two components, and their predictive power for stock market returns was studied. For VIX, we have daily data since 1986 (see below). But we work with monthly average data, since, as noted above, we are interested in monthly returns. 

\subsection{Organization of the article} In this article, we start by motivaing the use of VIX to normalize stock market returns in Section 2. We do this as follows: We compute several statistics for monthly stock index returns $Q_t$: skewness, kurtosis, and autocorrelation function. Next, we normalize these monthly returns $Q_t$ by dividing them by monthly average VIX $V_t$ and getting $Q_t/V_t$. Then we compute the same statistics for normalized stock index returns $Q_t/V_t$. It turns out that normalized returns are much closer to Gaussian independent identically distributed random variables than non-normalized returns. At the end of Section 2, we consider a slightly more complicated model with an intercept, see~\eqref{eq:stock-returns-advanced}. In Section 3, we fit the model~\eqref{eq:ln-VIX-AR} for monthly average VIX values $V_t$. This model fits well in the sense that residuals $W_t$ are independent identically distributed. But they are not Gaussian. We estimate the distribution of residuals and prove finite moment properties using the Hill estimator. Finally, in Section 4, we study long-term properties and finiteness of moments for this stochastic volatility log-Heston model. We reproduce the well-known property of real stock market returns: Pareto-like tails. 

The data is taken from public financial data libraries: Federal Reserve Economic Data and Kenneth French's Dartmouth College Financial Data. All code and data are on GitHub: \texttt{asarantsev/stocks-vix} repository.

\section{Statistical Motivation for Normalization}

\subsection{Data description} The main data series in this article is Chicago Board of Options Exchange (CBOE) Volatiltiy Index (VIX). This is taken from the Federal Reserve Economic Data (FRED) web site. The main series designed for Standard \& Poor 500 started in January 1990. This series is code-named VIXCLS on the FRED web site. Another series starts from January 1986, but was discontinued in November 2021. It is based on a related index, Standard \& Poor 100. This series is code named VXOCLS on the FRED web site. For each time series, we consider its monthly average values. When they overlap, these two indices are highly correlated. We unite them in one time series: Jan 1986 -- Feb 1990 for VXOCLS and Mar 1990 - Jan 2024 for VIXCLS. This is called $V_t$. We plot this data 1986--2024 for $T = 462$ months in \textsc{Figure}~\ref{fig:vix} (A). 

Next, we have data on monthly stock index returns. A {\it capitalization} of a stock is its market size (price of one stock times the number of outstanding stocks). We take two portfolios of stocks: Top 30\% and Mid 40\% (ranked by capitalization). Each portfolio is {\it capitalization-weighted}: Each stock in this portfolio is included in proportion to its size. We note that the returns of the Top 30\% portfolio closely correspond to that S\&P 500 or Russell 1000 (large US stocks), and Mid 40\% to Russell 2000 (small US stocks). This is why we call Top 30\% Large Stocks and Mid 40\% Small Stocks. The data is in percentages. As for the VIX, the data is Jan 1986 -- Jun 2024, total $T = 462$ months. We consider two versions of returns: {\it price returns}, which include only price movements, and {\it total returns}, which include dividends. Of course, total returns are always greater than or equal to price returns. Small price returns are on \textsc{Figure}~\ref{fig:vix} (B). 

\subsection{Description of statistical functions} Below we analyze whether $Q_1, \ldots, Q_T$ can reasonably be described as independent identically distributed (IID) Gaussian random variables, or {\it Gaussian white noise}. We test separately whether these are: (a) Gaussian; (b) IID. To this end, we compute the following statistical functions of returns $Q_1, \ldots, Q_T$. See the background on normality and IID testing in any standard time series textbook, for example \cite{Fan-Yao}: \cite[Section 1.4]{Fan-Yao} for testing ACF, and \cite[Section 1.5]{Fan-Yao} for normality testing. 

\begin{figure}[t]
\centering
\subfloat[$Q_t$]{\includegraphics[width=5cm]{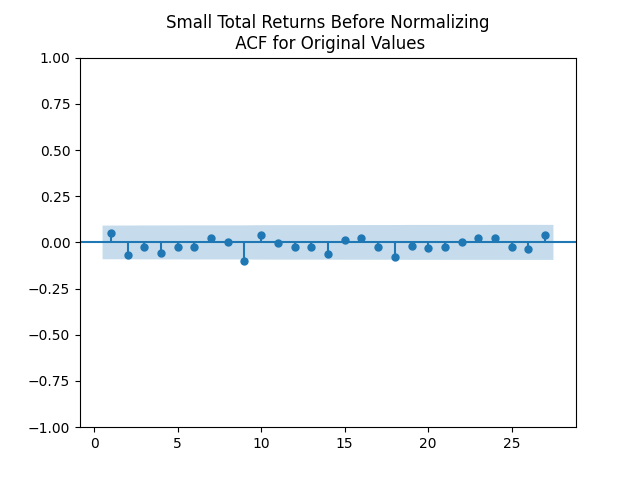}}
\subfloat[$|Q_t|$]{\includegraphics[width=5cm]{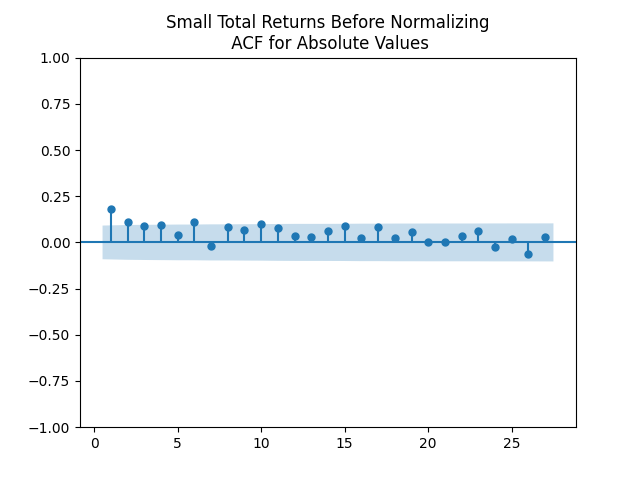}}
\subfloat[$Q_t$]{\includegraphics[width=5cm]{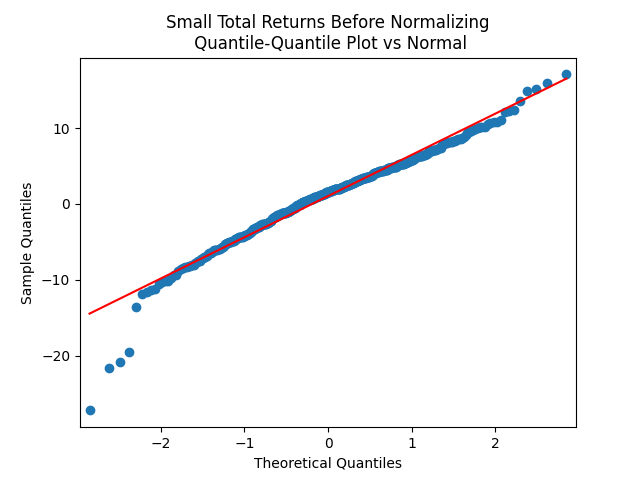}}
\newline
\subfloat[$Q_t/V_t$]{\includegraphics[width=5cm]{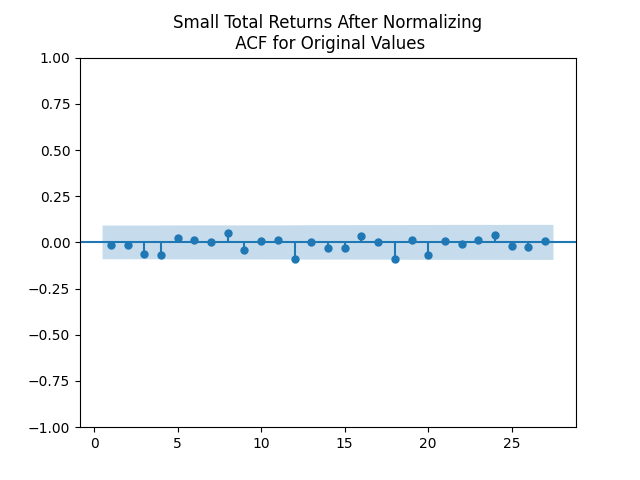}}
\subfloat[$|Q_t/V_t|$]{\includegraphics[width=5cm]{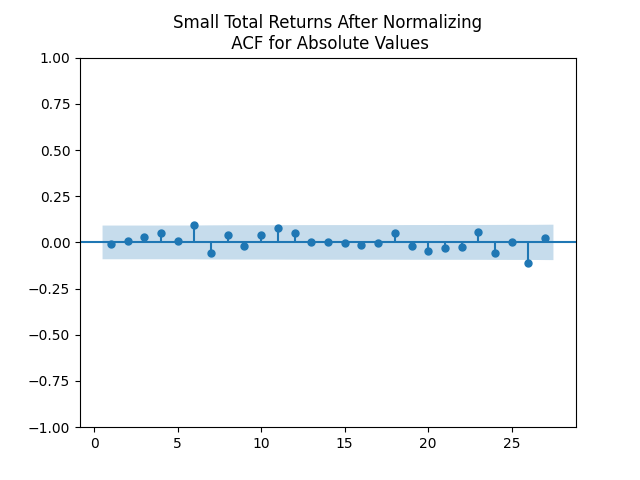}}
\subfloat[$Q_t/V_t$]{\includegraphics[width=5cm]{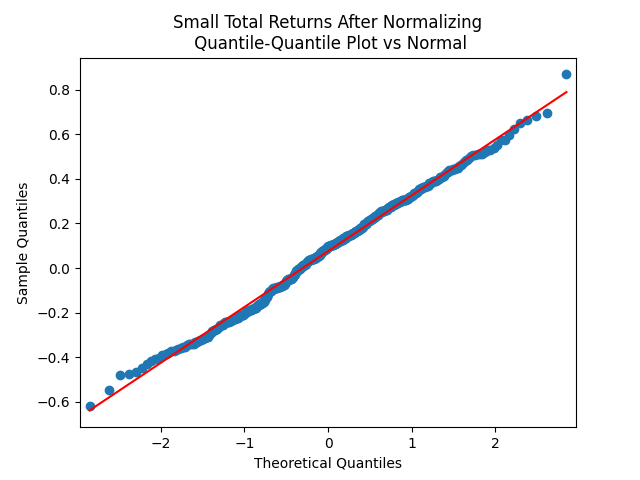}}
\caption{Statistical analysis of Small Total Returns: $Q_1, \ldots, Q_T$ before normalizing, and $Q_1/V_1, \ldots, Q_T/V_T$ after normalizing}
\label{fig:ACF-QQ}
\end{figure}

\begin{table}
\begin{tabular}{|c|c|c|c|c|}
\hline
Index Returns & $\mathfrak{s}$ & $\kappa$ & $|\!|\rho|\!|$  & $|\!|\tilde{\rho}|\!|$ \\
Small Total & -0.72/-0.02 & 2.52/-0.40 & 0.0121/0.0096 & 0.0629/0.0038\\
Large Total & -0.67/-0.11 &  1.73/-0.33 & 0.0072/0.0127 & 0.104/0.0167\\
Small Price & -0.72/-0.02 & 2.54/-0.4 & 0.0121/0.0101 & 0.0668/0.0043\\
Large Price & -0.67/-0.11 & 1.75/-0.33 & 0.0072/0.0129 & 0.1145/0.015\\
\hline
\end{tabular}
\caption{Statistical functions for stock index monthly returns $Q_t$ before/after normalizing, that is, dividing by monthly average VIX $V_t$. It is clear that dividing by VIX makes these functions closer to zero and therefore makes returns closer to Gaussian IID.}
\label{table:summary}
\end{table}

For (a), we compute {\it skewness} $\mathfrak{s}$ and {\it kurtosis} $\kappa$. Their well-known definitions are below. 
\begin{align}
\label{eq:skewness-kurtosis}
\mathfrak{s} := \frac{\hat{m}_3}{s^3}&\quad \mbox{and}\quad \kappa := \frac{\hat{m}_4}{s^4} - 3, \quad \mbox{where} \quad  \hat{m}_k := \frac1T\sum\limits_{t=1}^T(Q_t - \overline{Q})^k
\end{align}
is the empirical $k$th order centered moment, and $\overline{Q}$ and $s$ are the empirical mean and standard deviation. For i.i.d. $Q_1, \ldots, Q_T \sim Q$, these approximate true skewness and kurtosis of $Q$: 
\begin{equation}
\label{eq:true-skewness-kurtosis}
\mathfrak{s}(Q) := \frac{\mathbb E[(Q - \mathbb E[Q])^3]}{\mathrm{Var}^{3/2}(Q)},\quad \kappa(Q) := \frac{\mathbb E[(Q - \mathbb E[Q])^4]}{\mathrm{Var}^2(Q)} - 3.
\end{equation}
For a normal distribution $Q \sim \mathcal N(\mu, \sigma^2)$, we have: $\mathfrak{s}(Q) = \kappa(Q) = 0$. Thus if empirical skewness and kurtosis are closer to 0 this implies returns are closer to Gaussian. 

For (b), we compute the (empirical) {\it autocorrelation function} (ACF) $\rho(k)$ (for $k = 1, \ldots, 5$). This is the empirical correlation between $Q_1, \ldots, Q_{T-k}$ and $Q_{k+1}, \ldots, Q_T$. If $Q_1, \ldots, Q_T$ are IID, then for every $k = 1, \ldots, 5$, the theoretical ACF is $\mathrm{corr}(Q_j, Q_{j+k}) = 0$ (for any $j$). And the empirical ACF $\rho(k)$ is close to this theoretical value, which is $0$. To create a summary statistic, we then compute the $L^2$-norm (sum of squares) for this ACF. Incidentally, this is the Box-Pierce statistic, see \cite[Section 1.4.1]{Fan-Yao} or any other time series textbook. 
$$
|\!|\rho|\!| := \sum\limits_{k=1}^{5}\rho^2(k).
$$
Furthermore, we do the same for absolute values of returns: $|Q_1|, \ldots, |Q_T|$. The new ACF is denoted by $\tilde{\rho}(k)$, and we compute the $L^2$-norm for this new ACF: 
$$
|\!|\tilde{\rho}|\!| := \sum\limits_{k=1}^{5}\tilde{\rho}^2(k).
$$

\subsection{Results of statistical analysis} In \textsc{Table}~\ref{table:summary}, we have the results for original (non-normalized) returns $Q_1, \ldots, Q_T$ and for normalized returns $Q_1/V_1, \ldots, Q_T/V_T$ (after dividing by VIX). We see that the skewness and kurtosis for normalized retruns are much closer to zero. In the ACF for original values, there is not much improvement when switching from $Q_1, \ldots, Q_T$ to $Q_1/V_1, \ldots, Q_T/V_T$. But there is a lot of improvement, judging by the ACF for absolute values. The $L^1$-norm for these values become much closer to zero. 

\begin{table}
\begin{tabular}{|c|c|c|c|c|}
\hline
Returns & Small Total & Large Total & Small Price & Large Price\\
Mean & 0.075 & 0.072 & 0.069 & 0.062 \\
Stdev & 0.25 & 0.203 & 0.249 & 0.202\\
Correl & -54\% & -53\% & -54\% & -53\% \\
\hline
\end{tabular}
\caption{Analysis of normalized returns $Z_t = Q_t/V_t$ in~\eqref{eq:returns-basic}: mean, standard deviation, correlation with autoregression residuals $W_t$}
\label{table:mean-stdev}
\end{table}

\begin{table}
\begin{tabular}{|c|c|c|c|c|}
\hline
Returns & Small Total & Large Total & Small Price & Large Price\\
\hline
Coefficient $\theta$ & 3.6655 &3.3981 & 3.5628 & 3.2224 \\
Coefficient $\mu$ & -0.1304 & -0.1191 & -0.1316 & -0.1195 \\
Stdev $\sigma$ of $Z_t$ & 0.2421 & 0.1945 & 0.2416 & 0.1941 \\
\hline
Correlation with $W_t$ & -44\% & -42\% & -44\% & -42\% \\
Skewness $\mathfrak s$ for $Z_t$ & 0.022 & 0.024 & 0.026 & 0.026 \\
Kurtosis $\kappa$ for $Z_t$ & -0.392 & -0.275 & -0.391 & -0.27 \\
$|\!|\rho|\!|$ & 0.009 & 0.0177 & 0.0092 & 0.0177 \\
$|\!|\tilde{\rho}|\!|$ & 0.0174  & 0.0134 & 0.0177 & 0.0149 \\
\hline
\end{tabular}
\caption{Regression~\eqref{eq:stock-returns-regression} results and snalysis of residuals $Z_t$. Comparing with \textsc{Table}~\ref{table:summary}, we see that skewness, kurtosis, and ACF (measured by $|\!|\rho|\!|$ and $|\!|\tilde{\rho}|\!|$) are as close or closer to zero as normalized returns. Thus it makes sense to model regression residuals as Gaussian IID.}
\label{table:regression}
\end{table}

This is supported by the graphs of empirical ACF for returns before and after normalizing, as well as by the quantile-quantile plots vs the normal distribution. In \textsc{Figure}~\ref{fig:ACF-QQ}, we see ACF $\rho(k)$ for several $k$; ACF $\tilde{\rho}(k)$ for several $k$; and the quantile-quantile plot vs the normal distribution, for the original Small Total Returns $Q_1, \ldots, Q_T$. \textsc{Figure}~\ref{fig:ACF-QQ} contains the same for the normalized version: $Q_1/V_1, \ldots, Q_T/V_T$. Plots for Small Price Returns, Large Total Returns, and Large Price Returns look similarly. One sees that the ACF plots for $Q_t$ in \textsc{Figure}~\ref{fig:ACF-QQ}~\textsc{(A)} and for $Q_t/V_t$ in \textsc{Figure}~\ref{fig:ACF-QQ}~\textsc{(D)} look both close to zero, but there is a noticeable difference in the ACF plots for $|Q_t|$  in \textsc{Figure}~\ref{fig:ACF-QQ}~\textsc{(B)} and for $|Q_t/V_t|$ in \textsc{Figure}~\ref{fig:ACF-QQ}~\textsc{(E)}. The former has some values away from zero, but the latter looks closer to zero. Also, the quantile-quantile plots in \textsc{Figure}~\ref{fig:ACF-QQ}~\textsc{(C)} and~\textsc{(F)} show that normalized returns are closer to the normal than non-normalized ones. To the authors, it is a positive surprise that the volatility created from S\&P 500 (Large Price Returns) works well also for Small Stocks (Price or Total Returns), and Large Total Returns. 

We also refer the reader to a more extensive study \cite{Memory} of autocorrelation for S\&P 500 price returns. They show that raising absolute returns to a power produces more autocorrelation. However, their research is for daily, not monthly returns. 

Below, we provide \textsc{Table}~\ref{table:mean-stdev} of means $\mu$ and standard deviations $\sigma$ for four cases: Small Total Returns, Small Price Returns, Large Total Returns, Large Price Returns. To summarize, the model~\eqref{eq:returns-basic} with $Z_t$ IID Normal seems to be fitting well.

\subsection{Complete regression} A more general version of the model~\eqref{eq:returns-basic} for $Q_t$ has an intercept:
\begin{equation}
\label{eq:stock-returns-advanced}
Q_t = \theta + V_t(\mu + Z_t),\quad Z_t \sim \mathcal N(0, \sigma^2).
\end{equation}
We can rewrite~\eqref{eq:stock-returns-advanced} as follows:
\begin{equation}
\label{eq:stock-returns-regression}
\frac{Q_t}{V_t} = \frac{\theta}{V_t} + \mu + Z_t,\quad Z_t \sim \mathcal N(0, \sigma^2).
\end{equation}
Then we fit this regression for each of the four stock return series: Small Total, Small Price, Large Total, Large Price. The summary is in \textsc{Table}~\ref{table:regression}. Next, for residuals $Z_t$, their skewness and kurtosis are as small or smaller than the second ones in \textsc{Table}~\ref{table:summary}. This implies $Z_t$ are as well or even better modeled by the normal distribution than normalized returns. The coefficients $\theta$ and $\mu$ are significantly different from zero, with Student $t$-test $p$-values less than $0.002$. The model~\eqref{eq:stock-returns-advanced} will be our main model in the rest of the article.

\section{Time Series Analysis of Monthly Average VIX}

\subsection{The simplest Heston model} As discussed in the Introduction, the simplest model for VIX is {\it Heston model}~\eqref{eq:Heston}. Fitting this for monthly Jan 1986 -- Jun 2024 data gives us $\alpha = 3.097$, $\beta = 0.844$. However, there is a problem. \textsc{Figure}~\ref{fig:ACF-Heston} (A, B) has plots of the autocorrelation function (ACF) for innovations $W$ and for $|W|$. These show that $W$ are not quite white noise. In particular, ACF with lags 1 and 2 for $|W|$ are outside of the shaded region, corresponding to white noise hypothesis. Thus we switch to find another model.

\begin{figure}[t]
\centering
\subfloat[ACF for $W$ in~\eqref{eq:Heston}]{\includegraphics[width=7cm]{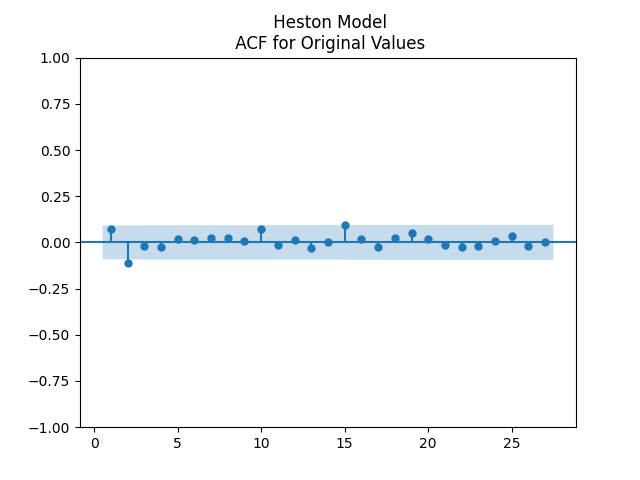}}
\subfloat[ACF for $|W|$ in~\eqref{eq:Heston}]{\includegraphics[width=7cm]{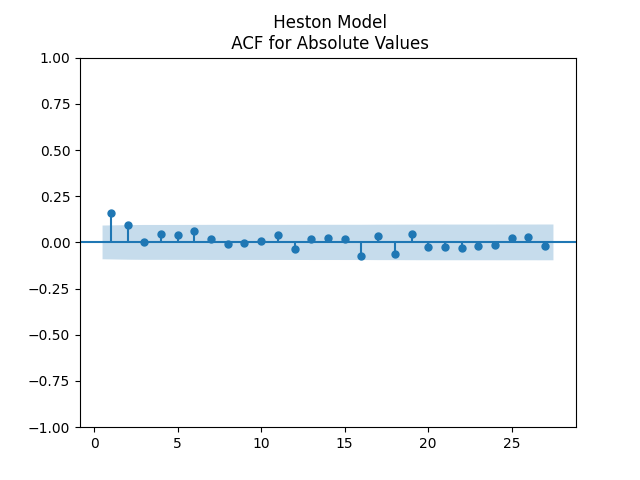}}
\newline
\subfloat[ACF for $W$ in~\eqref{eq:dln-VIX}]{\includegraphics[width=7cm]{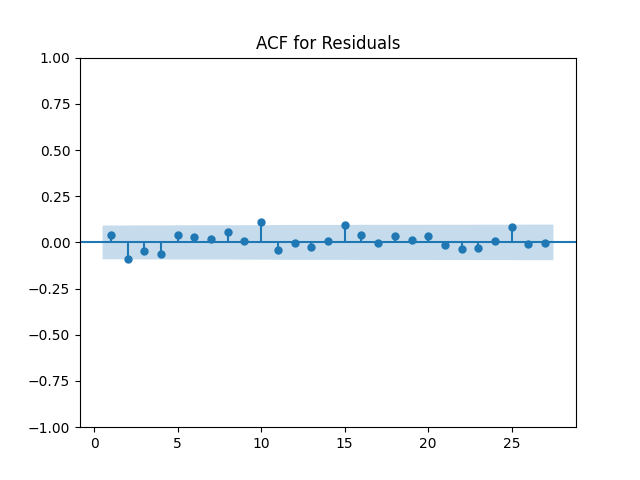}}
\subfloat[ACF for $|W|$ in~\eqref{eq:dln-VIX}]{\includegraphics[width=7cm]{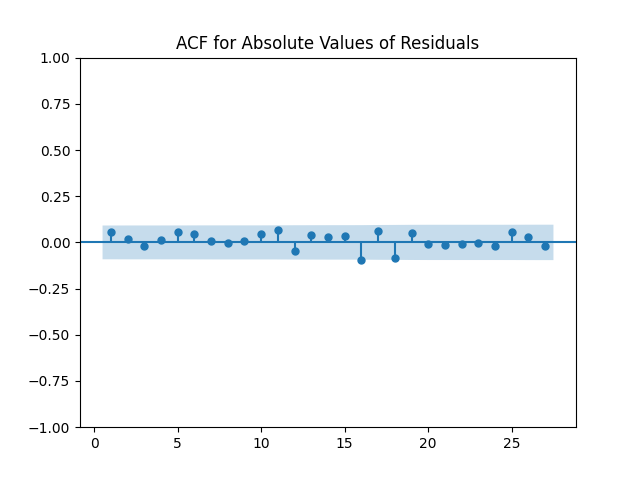}}
\caption{Analysis of Residuals $W$ in~\eqref{eq:Heston} and in~\eqref{eq:dln-VIX}}
\label{fig:ACF-Heston}
\end{figure}

\subsection{Autoregression of order 1 in log scale} As discussed in the Introduction, the VIX is always positive and mean-reverting. Thus it makes sense to use autoregression of order 1 on log scale, as in~\eqref{eq:ln-VIX-AR}. Rewrite this as
\begin{equation}
\label{eq:dln-VIX}
\triangle \ln V_t = \alpha + (\beta-1)\ln V_{t-1} + W_t.
\end{equation}
Fitting this for monthly Jan 1986 -- Jun 2024 data gives us $\alpha = 0.346$, $\beta - 1 = -0.118$ (therefore $\beta = 0.882$).  The correlation between $\triangle\ln V_t$ and $\ln V_{t-1}$ is $-24\%$, therefore $R^2 = 5.8\%$. The $p$-value for Student $t$-test with zero correlation as null hypothesis is $1.5\cdot 10^{-7}$. Note that zero correlation in~\eqref{eq:dln-VIX} corresponds to $\beta = 1$, so $\ln V_t$ becomes a random walk. The Augmented Dickey-Fuller unit root test applied to $\ln V$ with 15 lags gives us $p = 0.26\%$. Thus we reject the unit root null hypothesis. In sum, the evidence is strongly in favor of mean-reversion in $\ln V_t$. 

\textsc{Figure}~\ref{fig:ACF-Heston} (C, D) has plots of the autocorrelation function (ACF) for innovations $W$ and for $|W|$. These show clearly that $W$ can be reasonably modeled as independent identically distributed mean zero random variables. As discussed earlier with regard to stock market returns, we need an additional ACF plot for $|W|$ and are not satisfied with only the ACF plot for $W$. Indeed, it is possible for time series to have zero linear autocorrelation but depend upon the past in a nonlinear way. Stochastic Volatility models $X_t = V_tZ_t$ described in the Introduction are a perfect example of that, if $Z_t$ is independent of the innovations in $V_t$. 

However, the distribution of the innovations $W$ is not Gaussian, but has heavier tails. \textsc{Figure}~\ref{fig:gof-vixres} (A) clearly shows this using the quantile-qauntile plot vs the normal distribution. Computed using~\eqref{eq:skewness-kurtosis}, the skewness of $W$ is 2, and the excess kurtosis is 9. Of course, we reject the null normality hypothesis based on Shapiro-Wilk and Jarque-Bera tests. 

\subsection{Variance-gamma distribution of regression residuals} Instead, we fit variance-gamma distribution using Python package 
\texttt{dlaptev$\backslash$VarGamma} from \texttt{GitHub}. We use the parametrization from \cite{Seneta}, since that Python package uses it as well. 

\begin{defn}
The {\it variance-gamma} (VG), or {\it generalized asymmetric Laplace} (GAL), distribution is defined as the distribution of a random variable 
\begin{equation}
\label{eq:vg-defn}
X = c + a\Gamma + b\sqrt{\Gamma}Y,
\end{equation}
where $Y \sim \mathcal N(0, 1)$ and $\Gamma$ is independent of $Y$ and has a gamma distribution with density
$$
\frac{(1/\nu)^{1/\nu}}{\Gamma(1/\nu)}x^{1/\nu - 1}e^{-x/\nu},\quad x > 0.
$$
\end{defn}

Here, $\Gamma$ has shape parameter $\nu^{-1}$, and is rescaled so that $\mathbb E[\Gamma] = 1$. The moment generating function of this distribution is given by
\begin{equation}
\label{eq:finite-MGF}
M(t) = \mathbb E[e^{tX}] = e^{ic\omega}\left(1 - a\nu t - b^2\nu t^2/2\right)^{-1/\nu}.
\end{equation}
and is well-defined if $1 - \nu at - b^2\nu t^2/2 > 0$. More properties on this distribution, along with different parametrizations, are available in \cite{VG-review}. A multivariate version is discussed in \cite{GAL}. The package \texttt{dlaptev$\backslash$VarGamma} provides the maximum likelihood estimation of parameters, with initial step by method of moments done in \cite{Seneta}. For the residuals $W$ from~\eqref{eq:ln-VIX-AR}, the parameter estimation gives us 
\begin{equation}
\label{eq:est-params}
a = 0.0621, \quad b = 0.1392, \quad c = -0.0621, \quad \nu = 0.6573.
\end{equation}
Thus the MGF from~\eqref{eq:finite-MGF} is well-defined for $-16.1 < t < 9.7$. The QQ plot of the residuals $W$ vs simulated variance-gamma random variable~\eqref{eq:vg-defn} with parameters~\eqref{eq:est-params}, and the PP plot of $W$ vs this variance-gamma distribution, are given in \textsc{Figure}~\ref{fig:gof-vixres} (B) and (C). They show somewhat good but nor perfect fit though.

\begin{figure}[t]
\centering
\subfloat[QQ vs N]{\includegraphics[width = 5cm]{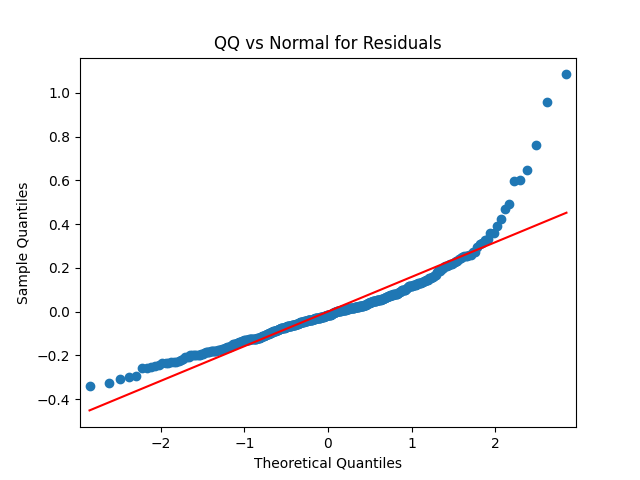}}
\subfloat[PP vs VG]{\includegraphics[width=5cm]{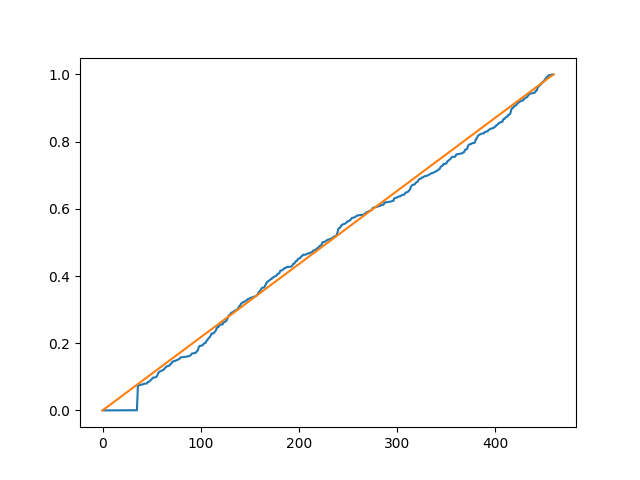}}
\subfloat[QQ vs VG]{\includegraphics[width=5cm]{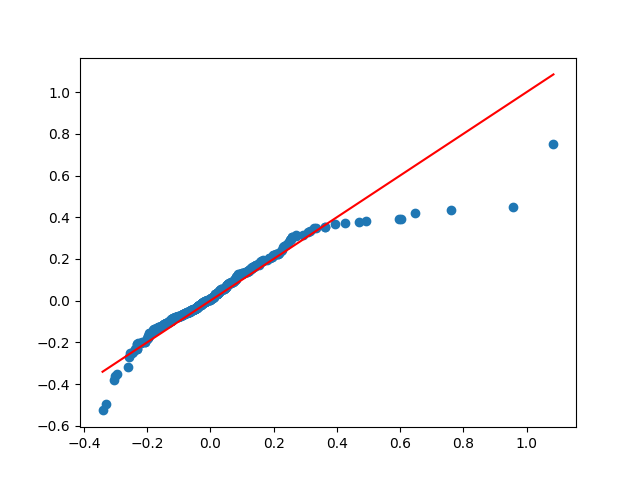}}
\bigskip
\caption{Plots for autoregression innovations $W$ in~\eqref{eq:ln-VIX-AR} vs Normal and Variance-Gamma (VG). Poor fit for Normal, but better for Variance-Gamma.} 
\label{fig:gof-vixres}
\end{figure}

However, we see that the VG model does not fit perfectly. There are problems in the right tail. At least we can estimate left and right tails using Hill estimator from \cite{Hill} for $e^{W_1}, \ldots, e^{W_T}$. We sort these: $e^{W_{(1)}} < \ldots < e^{W_{(T)}}$, and choose the cutoff $r$, and compute the left and right tail estimates $\gamma^-_r$ and $\gamma^{+}_r$, respectively. 
\begin{align}
\label{eq:Hill}
\begin{split}
\left(\gamma^-_r\right)^{-1} &:= \frac1r\sum\limits_{t=1}^r\ln e^{W_{(t)}} - \ln e^{W_{(r+1)}} = \frac1r\sum\limits_{t=1}^r W_{(t)} - W_{(r+1)},\\
\left(\gamma^+_r\right)^{-1} &:= -\frac1r\sum\limits_{t=T-r+1}^{t=T}\ln e^{W_{(t)}} + \ln e^{W_{(T-r)}} = \frac1r\sum\limits_{t=T-r+1}^T W_{(t)} - W_{(T-r)}.
\end{split}
\end{align}
An interesting question is how to choose the cutoff $r$. We plot $r$ vs $\gamma^-_r$ and $r$ vs $\gamma^+_r$, see \textsc{Figure}~\ref{fig:Hill}. We choose $r = 100$ and get $\gamma^+_r = 7.3$ and $\gamma^-_r = 15.7$. Therefore, 
$$
\mathbb E[e^{tW}] < \infty, \quad -14.7 < t < 6.3.
$$

\begin{figure}[t]
\centering
\subfloat[Left Tail $\gamma^-_r$ vs $r$]{\includegraphics[width = 7cm]{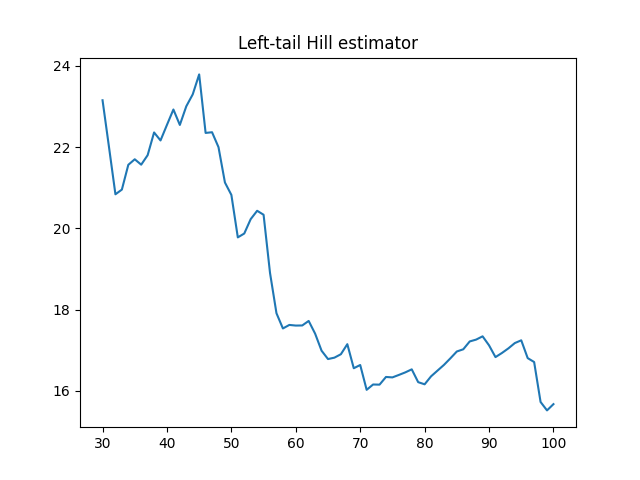}}
\subfloat[Right Tail $\gamma^+_r$ vs $r$]{\includegraphics[width = 7cm]{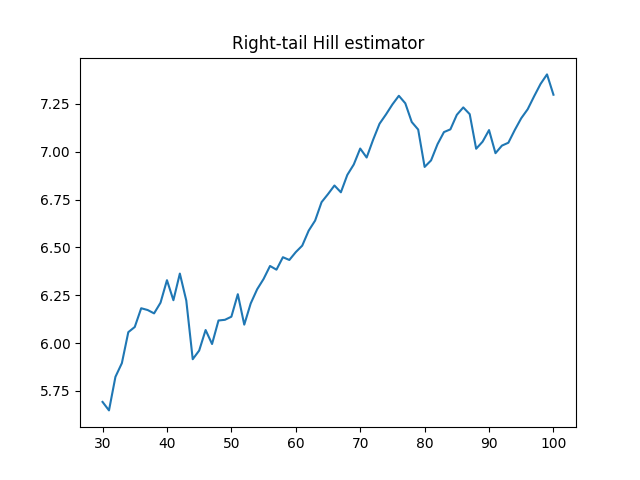}}
\caption{Hill Estimator~\eqref{eq:Hill} for $e^{W}$ vs $r$}
\label{fig:Hill}
\end{figure}

To conclude: Statistical analysis shows that $\ln V_t$ is well-modeled by mean-reverting autoregression of order $1$. The residuals of this autoregression are independent identically distributed but not Gaussian. However, they have exponential-like tails, with $\mathbb E[e^{tW}] < \infty$ for $|t| < t_0$ for some $t_0 > 0$. 

\section{Combined Model of Volatility and Stock Returns}

Put together equations for VIX and stock returns from~\eqref{eq:stock-returns-advanced} and~\eqref{eq:ln-VIX-AR}:
\begin{align}
\label{eq:double}
\begin{split}
\ln V_t &= \alpha + \beta \ln V_{t-1} + W_t,\\
Q_t &= \theta + V_t(\mu + Z_t).
\end{split}
\end{align}

As discussed in Section 2, $Z_t$ can be modeled as IID normal random variables: $Z_t \sim \mathcal N(\mu, \sigma^2)$. As discussed in Section 3, however, $W_t$ are IID mean-zero but not normal.

\subsection{Finite moments for volatility} First, let us deal just with the volatility modeled by the log Heston model~\eqref{eq:ln-VIX-AR}.

\begin{asmp}
In~\eqref{eq:ln-VIX-AR}, $0 < \beta < 1$, and $W_1, W_2, \ldots$ are independent identically distributed with $\mathbb E[W_t] = 0$ and $M_W(u) := \mathbb E[e^{uW_t}] < \infty$ for a certain $u > 0$; the initial condition $V_0$ is independent of all $W_t$, with $\mathbb E[V_0^u] < \infty$. 
\label{asmp:W}
\end{asmp}

\begin{lemma}
Under Assumption~\ref{asmp:W}, we have: $\sup_{t \ge 0}\mathbb E[V^u_t] < \infty$. Moreover, there exists a unique stationary distribution $\pi_{\infty}^V$, and the variable $V_{\infty} \sim \pi_{\infty}^V$ satisfies $\mathbb E[V^{u}_{\infty}] < \infty$. 
\label{lemma:finite-moment}
\end{lemma}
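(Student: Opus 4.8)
The plan is to solve the recursion~\eqref{eq:ln-VIX-AR} in closed form and then use independence to turn the moment $\mathbb E[V_t^u]$ into a product. Iterating $\ln V_t = \alpha + \beta\ln V_{t-1} + W_t$ gives
$$
\ln V_t = \alpha\,\frac{1-\beta^t}{1-\beta} + \beta^t\ln V_0 + \sum_{j=0}^{t-1}\beta^j W_{t-j}.
$$
Exponentiating, raising to the power $u$, and using that $V_0$ and all the $W_t$ are independent while the $W_{t-j}$ in the sum are distinct, I obtain
$$
\mathbb E[V_t^u] = \exp\!\Big(\tfrac{u\alpha(1-\beta^t)}{1-\beta}\Big)\,\mathbb E\big[V_0^{u\beta^t}\big]\prod_{j=0}^{t-1}M_W\!\big(u\beta^j\big).
$$
The deterministic prefactor is bounded by $\exp(u\alpha/(1-\beta))$ uniformly in $t$, and the initial-condition factor is controlled by the elementary bound $x^{s}\le 1+x$ for $x\ge 0$, $0\le s\le 1$: since $u\beta^t\le u$, we get $\mathbb E[V_0^{u\beta^t}]\le 1+\mathbb E[V_0^u]<\infty$. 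Everything thus reduces to the product.

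The crux is to show that $\prod_{j=0}^{\infty}M_W(u\beta^j)$ converges. Each factor is finite because $u\beta^j\in(0,u]$ and $M_W$, being finite at $u$, is finite on all of $[0,u]$ by convexity. Convergence of the product is equivalent to summability of $\sum_{j\ge 0}\ln M_W(u\beta^j)$, and here the mean-zero hypothesis is essential: finiteness of $M_W$ near the origin makes it smooth there with $M_W(0)=1$ and $M_W'(0)=\mathbb E[W]=0$, so a second-order expansion gives $\ln M_W(s)=O(s^2)$ as $s\to 0$. Consequently $\ln M_W(u\beta^j)=O(\beta^{2j})$, which is summable because $\beta<1$. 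This yields $\sup_{t\ge 0}\mathbb E[V_t^u]<\infty$, the first assertion. I expect this step—extracting the $O(s^2)$ behaviour so as to exploit mean-zero rather than merely a finite exponential moment—to be the one point that needs genuine care.

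For the stationary distribution I would pass to the limit in the explicit solution. As $t\to\infty$ the term $\beta^t\ln V_0\to 0$ almost surely, the deterministic part tends to $\alpha/(1-\beta)$, and, after reindexing, $\sum_{j=0}^{t-1}\beta^j W_{t-j}$ has the same law as $\sum_{j=0}^{t-1}\beta^j W_{j+1}$, whose limit $\sum_{j=0}^{\infty}\beta^j W_{j+1}$ converges almost surely and in $L^2$ (the variances are summable since $\beta<1$ and $W$ has finite variance, the latter following from finiteness of $M_W$ near $0$). Hence $\ln V_t$ converges in distribution, and exponentiating identifies $\pi_\infty^V$ as the law of $\exp\big(\alpha/(1-\beta)+\sum_{j\ge 0}\beta^j W_{j+1}\big)$. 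Uniqueness follows from the iterated-random-functions viewpoint: the update map $x\mapsto \alpha+\beta x+w$ is an affine contraction with Lipschitz constant $\beta<1$, so the invariant law is unique; equivalently, any stationary $\ln V_\infty$ satisfies the distributional fixed point $\ln V_\infty\stackrel{d}{=}\alpha+\beta\ln V_\infty'+W_1$, and backward iteration forces its law to coincide with the series above. Finally $\mathbb E[V_\infty^u]<\infty$ follows either from the same product computation applied to the infinite series, or most quickly from Fatou's lemma together with the convergence in distribution: $\mathbb E[V_\infty^u]\le\liminf_{t}\mathbb E[V_t^u]\le\sup_{t\ge 0}\mathbb E[V_t^u]<\infty$.
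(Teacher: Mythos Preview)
Your argument is correct and follows essentially the same route as the paper: solve the AR(1) explicitly, factor $\mathbb E[V_t^u]$ as a product of values of $M_W$, use $\mathbb E[W]=0$ to get $\ln M_W(s)=O(s^2)$ so the infinite product converges, and finish with Fatou for $V_\infty$ (the paper invokes Skorokhod representation to justify this last step, while you appeal directly to the weak-convergence form of Fatou). One small slip to fix: the inequality $x^s\le 1+x$ requires $s\le 1$, but you only have $u\beta^t\le u$, and $u$ may exceed $1$; use instead the equally elementary $x^s\le 1+x^u$ for $x\ge 0$, $0\le s\le u$ (split on $x\le 1$ and $x\ge 1$), which yields exactly the bound $\mathbb E[V_0^{u\beta^t}]\le 1+\mathbb E[V_0^u]$ you stated. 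The paper avoids both this and the $\alpha$-prefactor by assuming $\alpha=0$ without loss of generality.
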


\begin{proof}
Without loss of generality, assume $a = 0$. We rewrite~\eqref{eq:ln-VIX-AR} as
$$
\ln V_t = \beta^t\ln V_0 + \sum\limits_{s=1}^{t}\beta^{t-s}W_s. 
$$
Write this in exponential form:
$$
V^u_t = V_0^{\beta^tu}\cdot\prod\limits_{s=1}^t\exp\left[u\beta^{t-s} W_s\right].
$$
Apply the expectation, and use independence of $W_1, \ldots, W_t$:
\begin{equation}
\label{eq:prod-exp}
\mathbb E[V^u_t] = \mathbb E[V_0^{b^tu}] \cdot\prod\limits_{s=1}^t\mathbb E\left[\exp(u\beta^{t-s}W_s)\right].
\end{equation}
By Assumption~\ref{asmp:W}, $M_W(w)$ is well-defined for $0 \le w \le u$, in particular for $w = \beta^{t-s}u$ since $\beta \in (0, 1)$. Also, $M_W(0) = 1$, $M'_W(0) = \mathbb E[W] = 0$, $M''_W(0) = \mathbb E[W^2] := \sigma^2_W < \infty$. By Taylor's formula, we get: As $w \to 0$, 
\begin{equation}
\label{eq:Taylor}
\ln M_W(w) \sim M_W(w) - 1 \sim 0.5\sigma^2_Ww^2.
\end{equation}
Next, we take logarithms in~\eqref{eq:prod-exp}:
\begin{equation}
\label{eq:logs}
\ln \mathbb E[V^u_t] = \ln\mathbb E[V_0^{\beta^tu}] + \sum\limits_{s=1}^t\ln  M_W(u\beta^{t-s}) =  \ln\mathbb E[V_0^{\beta^tu}] + \sum\limits_{s=0}^{t-1}\ln  M_W(u\beta^{s}).
\end{equation}
As $t \to \infty$, we have: $u\beta^t \to 0$, and $\ln \mathbb E[V^u_t] \to \ln 1 = 0$. Moreover, $u\beta^s \to 0$ for $s \to \infty$. Applying~\eqref{eq:Taylor}, we get: $M_W(u\beta^s) \sim 0.5\sigma^2_Wu^2\beta^{2s}$ as $s \to \infty$. Therefore, both series below converge: 
$$
\sum 0.5\sigma^2_Wu^2\beta^{2s}\quad \mbox{and}\quad \sum\ln M_W(u\beta^s).
$$
Thus, the sequence~\eqref{eq:logs} converges as $t \to \infty$. Every converging sequence is bounded. This proves the first claim of Lemma~\ref{lemma:finite-moment}. Let us show the second claim. This stationary distribution is, in fact, the weak limit: $V_t \to V_{\infty}$. By the Skorohod representation theorem, we can switch to almost sure convergence on a new probability space. But $\sup_{t \ge 0}\mathbb E[V^u_t] < \infty$. By Fatou's lemma, we get: $\mathbb E[V^u_{\infty}] < \infty$. This proves the second claim of Lemma~\ref{lemma:finite-moment}.
\end{proof}

\subsection{Finite moments for stock returns} Now, we turn back to stock index returns $Q$. Under Assumption~\ref{asmp:W}, we prove they have finite moments up to a certain order. This captures a well-known property of stock market returns: Pareto-like tails, first noted by Fama in \cite{Fama}. However, the second moment of real-world stock returns is finite. This follows from Theorem~\ref{thm:moment} for $u = 2 + \varepsilon$ and the empirical analysis in the previous section.

\begin{asmp}  $(Z_t, W_t)$ are IID random vectors with mean $(0, 0)$, while the marginal distribution of $Z$ is Gaussian. 
\label{asmp:S}
\end{asmp}

We do {\it not} require $Z_t$ and $W_t$ to be independent or even uncorrelated, though this would simplify computations; see later.

\begin{theorem}
Under Assumptions~\ref{asmp:W} and~\ref{asmp:S}, the system~\eqref{eq:double} has a unique stationary solution $(V_{\infty}, Q_{\infty})$ with $\mathbb E[|Q_t|^w] < \infty$ and $\mathbb E[|Q_{\infty}|^w] < \infty$ for any $w \in (0, u)$.
\label{thm:moment}
\end{theorem}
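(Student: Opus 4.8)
The plan is to reduce the whole statement to the finiteness of $\mathbb{E}[V_t^w|Z_t|^w]$ and then lean on Lemma~\ref{lemma:finite-moment}. First I would split $Q_t = \theta + \mu V_t + V_t Z_t$ and use the elementary bound $|x+y+z|^w \le C_w(|x|^w + |y|^w + |z|^w)$, valid for every $w > 0$ (with $C_w = 1$ when $w \le 1$ and $C_w = 3^{w-1}$ when $w \ge 1$), to obtain
$$
\mathbb{E}[|Q_t|^w] \le C_w\Bigl(|\theta|^w + |\mu|^w\,\mathbb{E}[V_t^w] + \mathbb{E}[V_t^w|Z_t|^w]\Bigr).
$$
The first term is constant; the second is finite and bounded in $t$ because $w < u$ and Lyapunov's inequality give $\mathbb{E}[V_t^w] \le (\mathbb{E}[V_t^u])^{w/u}$, with $\sup_t \mathbb{E}[V_t^u] < \infty$ by Lemma~\ref{lemma:finite-moment}. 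So the entire difficulty concentrates in the last term.

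The main obstacle is precisely that $V_t$ and $Z_t$ are correlated: $V_t$ depends on $W_t$ together with the earlier innovations, and Assumption~\ref{asmp:S} explicitly permits $Z_t$ and $W_t$ to be dependent, so I cannot factor the expectation through independence. The observation that dissolves this obstacle is that Hölder's inequality requires no independence at all. Since $w < u$ strictly, the ratio $u/w$ exceeds $1$, so I can fix a Hölder exponent $p \in (1, u/w)$ with conjugate $q = p/(p-1)$ and estimate
$$
\mathbb{E}[V_t^w|Z_t|^w] \le \bigl(\mathbb{E}[V_t^{wp}]\bigr)^{1/p}\bigl(\mathbb{E}[|Z_t|^{wq}]\bigr)^{1/q}.
$$
Because $wp < u$, a second application of Lyapunov's inequality and Lemma~\ref{lemma:finite-moment} bound the first factor uniformly in $t$; because $Z_t$ is Gaussian (Assumption~\ref{asmp:S}), all of its absolute moments are finite, so the second factor is a finite constant independent of $t$. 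This yields $\sup_t \mathbb{E}[|Q_t|^w] < \infty$, and in particular $\mathbb{E}[|Q_t|^w] < \infty$ for each $t$. The role of the open interval $w \in (0,u)$ is exactly to leave room for an exponent $p > 1$; at the endpoint $w = u$ this argument would break down.

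For the stationary solution I would use that Lemma~\ref{lemma:finite-moment} already supplies a unique stationary law $\pi_\infty^V$ for the Markov chain $\ln V_t$. Passing to the two-sided stationary representation $\ln V_t = \alpha/(1-\beta) + \sum_{s\ge 0}\beta^s W_{t-s}$ driven by a two-sided IID sequence $(Z_t, W_t)_{t\in\mathbb{Z}}$ and setting $Q_t := \theta + V_t(\mu + Z_t)$ produces a stationary process $(V_t, Q_t)$ solving~\eqref{eq:double}, whose time-$t$ marginal defines $(V_\infty, Q_\infty)$. Uniqueness follows because any stationary solution forces $V$ to have law $\pi_\infty^V$ (uniqueness from Lemma~\ref{lemma:finite-moment}), after which $Q_t$ is a fixed measurable function of $V_t$ and $Z_t$, pinning down the joint law. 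Finally, $\mathbb{E}[|Q_\infty|^w] < \infty$ follows by repeating the decomposition and Hölder estimate above with $(V_\infty, Z_\infty)$ in place of $(V_t, Z_t)$, using $\mathbb{E}[V_\infty^u] < \infty$ from Lemma~\ref{lemma:finite-moment}; alternatively it can be deduced from $\sup_t \mathbb{E}[|Q_t|^w] < \infty$ via weak convergence, the Skorohod representation theorem, and Fatou's lemma, exactly as in the proof of Lemma~\ref{lemma:finite-moment}.
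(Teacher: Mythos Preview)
Your proof is correct and follows essentially the same route as the paper: split $Q_t$ into its three summands, control the cross term $\mathbb{E}[V_t^w|Z_t|^w]$ by H\"older's inequality so that the $V$-factor lands within the moment range covered by Lemma~\ref{lemma:finite-moment} and the $Z$-factor is handled by Gaussianity, then repeat for the stationary pair. The only cosmetic differences are that the paper uses Minkowski's inequality (implicitly assuming $w\ge 1$) where you use the elementary power bound valid for all $w>0$, and the paper takes $p = u/w$ exactly rather than your $p\in(1,u/w)$; your choices are, if anything, slightly cleaner.
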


\begin{proof} That the system has a stationary solution is clear. Now, by Minkowski's inequality, 
\begin{equation}
\label{eq:ineq-1}
\left(\mathbb E[|Q_t|^w]\right)^{1/w} \le \theta + \mu\left(\mathbb E[V_t^w]\right)^{1/w} + \left(\mathbb E[V_t^wZ_t^w]\right)^{1/w}.
\end{equation}
We already have $\mathbb E[V_t^w] < \infty$. It suffices to prove $\mathbb E[V_t^wZ_t^w] < \infty$. Pick $p = u/w > 1$ and let $q > 1$ be such that $1/p + 1/q = 1$. Then by H\''older's inequality, 
\begin{equation}
\label{eq:ineq-2}
\mathbb E[|Q_t|^w] \le \left(\mathbb E\left[|V_t|^{wp}\right]\right)^{1/p}\cdot \left(\mathbb E\left[|Z_t|^{wq}\right]\right)^{1/q} = 
\left(\mathbb E\left[|V_t|^u\right]\right)^{1/p}\cdot  C(wq)^{1/q},
\end{equation}
where $C(a)$ is the $a$th absolute moment of the random variable $Z$: $C(a) := \mathbb E[|Z|^a]$. Therefore, this $w$th moment of $Q_t$ is finite. We can take $\sup_{t \ge 0}$ in~\eqref{eq:ineq-1} and~\eqref{eq:ineq-2}, and repeat the proof. The same proof works for $Q_{\infty}$ instead of $Q_t$. 
\end{proof}

\subsection{Limit theorems} Below, under additional assumptions, we state and prove the Law of Large Numbers and the Central Limit Theorem for 
monthly stock index returns. Since these assumptions are borne out by real-world data, we see that the real-world property is reproduced: Over longer time periods, returns are closer to normal. See discussion of this in \cite{Aggregate} and references therein. 

\begin{asmp}
The distribution of $W_t$ has strictly positive density on the entire real line. Moreover, $\mathbb E\left[e^{2W_t}\right] < \infty$. 
\label{asmp:W-density}
\end{asmp} 

\begin{theorem} Under Assumptions~\ref{asmp:W},~\ref{asmp:S},~\ref{asmp:W-density}, the stock index returns satisfy the Law of Large Numbers: as $T \to \infty$, almost surely
\begin{equation}
\label{eq:LLN}
\overline{Q}_T := \frac1T(Q_1 + \ldots + Q_T) \to \mathbb E[Q_{\infty}],
\end{equation}
and the Central Limit Theorem:
\begin{equation}
\label{eq:CLT}
\sqrt{T}(\overline{Q}_T - \mathbb E[Q_{\infty}]) \to \mathcal N(0, \sigma^2_Q).
\end{equation}
\end{theorem}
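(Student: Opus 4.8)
The plan is to treat the two claims separately, exploiting that in stationarity both $V_t$ and $Q_t$ are measurable functions of the two-sided IID innovation sequence $\{(W_s,Z_s)\}_{s\le t}$, with the influence of the distant past decaying geometrically. First I would record the stationary representation: iterating~\eqref{eq:double} and using $\beta\in(0,1)$, the stationary log-volatility is $\ln V_t = \frac{\alpha}{1-\beta}+\sum_{j\ge 0}\beta^j W_{t-j}$, a series converging almost surely and in the relevant $L^p$ because $W$ has a finite exponential moment (Assumption~\ref{asmp:W}). Hence $Q_t=\theta+e^{\ln V_t}(\mu+Z_t)$ is a fixed measurable function of $\{(W_s,Z_s)\}_{s\le t}$. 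Since a measurable factor of a two-sided IID sequence under the shift is stationary and ergodic, $\{Q_t\}$ is stationary and ergodic. Theorem~\ref{thm:moment} (with any $u>1$, available under Assumption~\ref{asmp:W-density}) gives $\mathbb E|Q_\infty|<\infty$, so Birkhoff's ergodic theorem yields~\eqref{eq:LLN}. This disposes of the Law of Large Numbers and, crucially, handles the $Z$--$W$ correlation automatically, as it never uses independence of $Z_t$ and $W_t$.

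For the Central Limit Theorem the first task is finite variance. Assumption~\ref{asmp:W-density} gives $\mathbb E[e^{2W}]<\infty$, and as flagged in the remark preceding Theorem~\ref{thm:moment} one in fact works with $u=2+\varepsilon$; then Lemma~\ref{lemma:finite-moment} gives $\mathbb E[V_\infty^{u}]<\infty$, and the H\"older argument of Theorem~\ref{thm:moment} upgrades this to $\mathbb E|Q_\infty|^{2+\delta}<\infty$ for some $\delta>0$, in particular $\mathbb E[Q_\infty^2]<\infty$. The second task is a dependence bound. I would show the AR(1) chain $\ln V_t$ is geometrically ergodic: Assumption~\ref{asmp:W-density} (strictly positive density on $\mathbb R$) makes it $\phi$-irreducible and aperiodic, while the contraction $\beta<1$ together with the exponential moment supplies a geometric drift (Lyapunov) condition. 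Geometric ergodicity makes the stationary chain absolutely regular ($\beta$-mixing) with geometrically decaying coefficients; attaching the innovation $Z_t$, which only introduces same-time dependence through its correlation with $W_t$, preserves the geometric mixing rate of $\{Q_t\}$.

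With $(2+\delta)$ moments and geometric mixing established, I would invoke the CLT for stationary strongly mixing sequences (Ibragimov's theorem), giving~\eqref{eq:CLT} with $\sigma_Q^2=\mathrm{Var}(Q_\infty)+2\sum_{k\ge 1}\mathrm{Cov}(Q_0,Q_k)$, the series converging absolutely since the autocovariances decay geometrically. A self-contained alternative avoiding the mixing CLT is to approximate $Q_t$ by its $(m{+}1)$-dependent truncation $Q_t^{(m)}:=\theta+\exp(\tfrac{\alpha}{1-\beta}+\sum_{j=0}^m\beta^j W_{t-j})(\mu+Z_t)$, apply the classical $m$-dependent CLT to $Q^{(m)}$, and pass to the limit using $\|Q_0-Q_0^{(m)}\|_{L^2}\to 0$ geometrically (Billingsley's approximation lemma).

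The \emph{main obstacle} I anticipate is this dependence bound, specifically controlling the multiplicative interaction of the heavy-tailed factor $e^{W}$ with $Z$. The truncation error $Q_0-Q_0^{(m)}=(\mu+Z_0)\,V_0^{(m)}(e^{R^{(m)}}-1)$ with $R^{(m)}=\sum_{j>m}\beta^j W_{-j}$ must be shown small in $L^2$, which is exactly where the exponential-moment assumption and the geometric weights $\beta^j$ are essential, and where the borderline nature of the second-moment hypothesis makes the bookkeeping delicate. Verifying the geometric drift condition and confirming that the $Z$--$W$ correlation does not degrade the mixing rate are the steps requiring care; the remainder is routine once these estimates are in place.
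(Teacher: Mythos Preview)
Your approach is sound but takes a different route from the paper's. The paper works entirely in the Markov-chain framework: it notes that $(\ln V_t, Z_t)$ is a linear Markov chain on $\mathbb R^2$, asserts $V$-uniform ergodicity with Lyapunov function $V(x_1,x_2)=x_1^2+x_2^2$ (Assumption~\ref{asmp:W-density} supplying irreducibility and aperiodicity), and then invokes the LLN and CLT from Meyn--Tweedie \cite[Chapters~16--17]{MeynTweedieBook} in one stroke. Your LLN via the Bernoulli-shift representation and Birkhoff is more elementary and sidesteps the Markov machinery entirely; for the CLT, your route (a) through geometric ergodicity and $\beta$-mixing is close in spirit to the paper's argument (it uses essentially the same ergodicity input, just quoting Ibragimov rather than Meyn--Tweedie), while your route (b) via $m$-dependent truncation is genuinely different and more self-contained, exploiting the explicit MA($\infty$) structure of $\ln V_t$.

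One caution: your appeal to ``$u=2+\varepsilon$'' misreads the remark preceding Theorem~\ref{thm:moment}, which is an empirical observation about the data, not part of the hypotheses of the present theorem. Assumption~\ref{asmp:W-density} only guarantees $\mathbb E[e^{2W}]<\infty$, so a $(2{+}\delta)$-moment of $Q_\infty$ is not available from the stated assumptions, and you must run the CLT with exactly second moments. The paper's proof is equally terse on this point---its quadratic Lyapunov function in $(\ln V,Z)$ does not dominate $Q^2\asymp e^{2\ln V}Z^2$, so the Meyn--Tweedie CLT is not literally applicable with that choice of $V$ either; in both approaches the honest verification comes down to controlling $\mathbb E[e^{2W}Z^2]$, which is precisely where the joint law of $(W,Z)$ enters.
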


\begin{proof} Under Assumptions~\ref{asmp:W},~\ref{asmp:S},~\ref{asmp:W-density}, the process $((V_t, Q_t), t \ge 0)$ is Markov on the state space $\mathcal X := (0, \infty)\times \mathbb R$ with $n$-step {\it transition function} $P^n((v, q), A) := \mathbb P((V_n, Q_n) \in A\mid (V_0, Q_0) = (v, q))$. The Markovian property follows from the fact that $(V_t, Q_t)$ is a one-to-one function of $(\ln V_t, Z_t)$. Indeed, $((\ln V_t, Z_t), t \ge 0)$ is a Markov process on $\mathbb R^2$. This is because $\ln V_t$ is an autoregression of order $1$, and $Z_t$ are IID. Thus it is a linear model. What is more, this process $((\ln V_t, Z_t), t \ge 0)$ has a unique stationary distribution, and is $V$-uniformly ergodic, with $V(x_1, x_2) := x_1^2 + x_2^2$. See the definitions and results in \cite[Chapter 16]{MeynTweedieBook}. The Law of Large Numbers and the Central Limit Theorem  follow from \cite[Chapter 17]{MeynTweedieBook}.
\end{proof}

\section{Conclusion} We propose and fit a log-Heston model for monthly average volatility and stock idnex returns. It fits actual financial data well. This model exhibits good long-term properties: stationarity and mean-reversion. It captures a well-known property of real-world stock idnex returns: Pareto-like tails. For this, we need innovations in this log-Heston model which have tails heavier than Gaussian. Future research might include finding bivariate distributions which are a good fit for innovations $(Z_t, W_t)$. Also, it is important to analyze other stock indices, such as Value, Growth, or international.

\end{document}